\newtheorem{theorem}{Theorem}
\newtheorem{lemma}{Lemma}
\newtheorem{proposition}{Proposition}
\newtheorem{definition}{Definition}
\newtheorem{example}{Example} 
\newtheorem{remark}{Remark}
\begin{document}

\begin{center}
\textbf{\uppercase{Levy Laplacian on manifold and Yang-Mills heat flow}}

B.~O.~Volkov

borisvolkov1986@gmail.com

1) Bauman Moscow State Technical University, Moscow, 105005 Russia

2) Steklov Mathematical Institute of Russian Academy of Sciences,\\ Moscow, 119991 Russia
\end{center}

\textbf{Abstract}: A covariant definition of the Levy Laplacian on an infinite dimensional manifold is introduced. It is shown that a time-depended connection in a finite dimensional vector bundle is a solution of the Yang-Mills heat equations
if and only if the associated flow of the parallel transports is a solution of the heat equation for the covariant Levy Laplacian on the infinite dimensional manifold.

2010 Mathematics Subject Classification: 70S15,58J35

key words: Levy Laplacian, Yang-Mills equations, Yang-Mills heat equations, infinite dimensional manifold 

\section*{Intoduction}

In our work we relate  two differential equations of the heat  type: the quasi-linear   Yang-Mills heat equations on a finite-dimensional manifold and the   linear heat equation for the Levy Laplacian on an infinite-dimensional manifold. Namely, we generalize Accardi--Gibilisco--Volovich theorem on the equivalence of the Yang-Mills equations  and the Laplace equation for the Levy Laplacian in the following way: we show that a time-depended connection in a finite-dimensional vector bundle is a solution of the Yang-Mills heat equations if and only if the associated flow of the parallel transports is a solution of the heat equation for the  Levy Laplacian.

The Levy Laplacian is an infinite dimensional Laplacian which  has not any  finite dimensional analogs.
It was introduced by Paul Levy on the functions on $L_2(0,1)$  in the 1920s as follows.
 Let the second derivative of  $f\in C^2(L_2(0,1),\mathbb{R})$ have the form  
 \begin{equation}
 \label{1}
<f''(x)u,v>=\int_0^1\int_0^1 K_V(x;t,s) u(t)v(s)dtds+\int_0^1 K_L(x;t)u(t)v(t)dt,
 \end{equation}
where $K_V(x;\cdot,\cdot)\in L_2([0,1]\times[0,1])$ and $K_L(x;\cdot)\in L_\infty[0,1]$.\footnote{The kernels $K_V(x;\cdot,\cdot)$ and $K_L(x;\cdot)$  are called the Volterra kernel and the L\'{e}vy kernel respectively.}  
Then the Levy Laplacian $\Delta_L$ acts on $f$ by the formula
\begin{equation}
\label{Levy1}
\Delta_Lf(x)=\int_0^1K_L(x;t)dt.
\end{equation}
Another original definition of the Levy Laplacian by Paul Levy is the following.
Let $\{e_n\}$ be an orthonormal basis in $L_2(0,1)$.  Then the Levy Laplacian (generalized by the orthonormal basis $\{e_n\}$) 
acts on $f\in C^2(L_2(0,1),\mathbb{R})$ by the formula
\begin{equation}
\label{Levy2}
\Delta^{\{e_n\}}_L f(x)=\lim_{n\to \infty} \frac 1n \sum_{k=1}^n<f''(x)e_k,e_k>.
\end{equation}
For some orthonormal bases (for example, for $e_n(t)=\sqrt{2}\sin{\pi n t}$) the definitions 
 coincide on the domain of $\Delta_L$ (see~\cite{L1951} and also~\cite{F1986,F2005,KOS}).

The modern situation is the following. The term "Levy Laplacian" is used for various analogs and generalizations of the original Levy Laplacians $\Delta_L$ and $\Delta_L^{\{e_n\}}$. These Levy Laplacians  act on functions (or generalized functions) over different infinite-dimensional spaces.  One of these Levy Laplacians was  introduced by Accardi, Gibilisco and Volovich in~\cite{AGV1993,AGV1994}.   We will denote it  by the symbol $\Delta_L^{AGV}$. The operator $\Delta_L^{AGV}$ was defined by analogy with~(\ref{Levy1}). In~\cite{AGV1993,AGV1994} it was shown that a connection in a  vector bundle over $\mathbb{R}^d$ is a solution of the Yang-Mills equations  if and only if the parallel transport associated with the connection is a solution of the Laplace equation for the Laplacian  $\Delta_L^{AGV}$. The definition of the Levy Laplacian $\Delta_L^{AGV}$ and the theorem on the relationship between the Levy Laplacian and the Yang-Mills equations   was generalized for the case of manifold 
by   Leandre and Volovich in~\cite{LV2001}.   Another definition of the Levy Laplacian on the infinite dimensional manifold was 
introduced by Accardi and Smolyanov in~\cite{AS2006}. In their work the Levy Laplacian was defined as the Cesaro mean of the second order directional derivatives by analogy with~(\ref{Levy2}). The relationship of this Levy Laplacian  and the Yang-Mills equations was studied in~\cite{Volkovdiss}.
The relationship between the Yang-Mills equations and different Levy Laplacians was also studied in~\cite{VolkovLLI,Volkov2017,Volkov2018,VolkovVINITI,Volkov2019}.

In the current paper we introduce the definition of the Levy Laplacian on a manifold in terms of covariant derivatives.  We define this operator as the composition of some infinite dimensional divergence and some nonstandard  gradient. This covariant Levy Laplacian is analog of operator~(\ref{Levy1}). In the flat case its definition coincides with the definition of $\Delta^{AGV}_L$. But in general case its  definition is slightly different from the definition  by Leandre and Volovich from~\cite{LV2001}, which was not in terms of covariant derivatives and was based on the triviality  of the tangent bundle of the base infinite dimensional manifold. However, it seems that the covariant Levy Laplacian, the Levy Laplacian introduced by Leandre and Volovich and the Levy Laplacian introduced by Accardi and Smolyanov coincide on the domain of the first of them.

There are many papers devoted to the heat equations for the Levy Laplacians.
In~\cite{F1986,F2005} some methods of infinite-dimensional analysis were used to study various differential equations with the Levy Laplacian including the heat equation.
 In works~\cite{AS1993,ARS,AB,AS2002} the Levy heat semigroup on the space generalized by the Fourier transforms of the measures on some infinite dimensional space was studied.
The approach to the heat equation for the Levy Laplacian based on the white noise analysis was used in works~\cite{Obata2001,KOS2002,AJS2013} (see also review~\cite{K2003}). In the paper~\cite{AS2006} representations  in  the form of Feynman formulas  for solutions of the heat equation for the Levy Laplacian on  a manifold were obtained. Unfortunately, it  seems that the Levy Laplacian $\Delta_L^{AGV}$, that is connected to the Yang-Mills equations, doesn't coincide with the Laplacians that were used in the mentioned  works, except~\cite{AS2006} (see the discussion in~\cite{Volkov2018}). Is it possible to transfer  the technique of these works for the study of the  Yang-Mills heat equations is an open question.   Some possible ways  for the application of the heat equation for the Levy Laplacian to study  the Yang-Mills equations  are discussed in~\cite{Accardi}.

The Yang-Mills heat flow is a gradient flow for the Yang-Mills action functional. It was introduced by Attiah and Bott in~\cite{AtBo}
and was studied by Donaldson in~\cite{Donaldson} (see also~\cite{DK}). 
If the base manifold is 2-dimensional or 3-dimensional than it is possible to construct a solution of the Yang-Mills equations by solving the Yang-Mills heat equations and letting time tend to infinity (see~\cite{Rade}). In the case of the structure group $U(1)$  the Yang-Mills heat equations are simply the heat equations for 1-forms and a solution of these equations tends to  a harmonic 1-form as time tends to infinity (see~\cite{MR}).  In the dimension four the blow-up does not occur for spherical symmetric solutions (see~\cite{GrotShat}). In the general case the Yang-Mills heat equations have blow-up. 
The dependence of the heat  equation for the Levy Laplacian on the dimension of the base manifold was never studied.
It is interesting to study the behavior of  solutions of this equation in the case then the Yang-Mills heat equation has a  blow-up. 
In~\cite{ABT} the approach to the Yang-Mills heat equations based on the (stochastic) parallel transport was used. But unlike ours,  this  approach  was not based on the Levy Laplacian.
 The Yang-Mills equations and the Yang-Mills heat equations are also related in the following way. In~\cite{Oh1,Oh2} the proof of well-posedness of the Cauchy problem for the  Yang-Mills
equations on the Minkowski space based on the application of the Yang-Mills heat flow was suggested.

The paper is organized as follows. In the first section we give preliminary information from the finite dimensional geometry about  the Yang-Mills heat equation on a time depended connection in the finite dimensional vector bundle. In the second section we give preliminary information from the infinite dimensional geometry about the base Hilbert manifold of the $H^1$-curves. In the third section we introduce the $H^0$-gradient on the space 
of sections in the vector bundle over the base Hilbert manifold of the curves. We consider the parallel transport as a section in this vector bundle and find the value of the $H^0$-gradient on the parallel transport. In the fourth section we define the Levy Laplacian as the composition of the special infinite dimensional divergence and $H^0$-gradient. We find the value of the Levy Laplacian on the parallel transport. In the fifth section we prove the theorem on the equivalence of 
the Yang-Mills heat equations and the heat equation for the Levy Laplacian.

\section{Yang-Mills heat equations}

Bellow, if $\textbf{E}_0$ is a vector bundle over a finite or  infinite dimensional manifold
 $\textbf{M}_0$, the symbol $C^\infty(\textbf{M}_0,\textbf{E}_0)$ denotes the space of  smooth global sections in this bundle
and the symbol $C^\infty(\textbf{W}_0,\textbf{E}_0)$ denotes the space of smooth local  sections on an open set $\textbf{W}_0\subset \textbf{M}_0$. In the infinite-dimensional case derivatives are  understood in the Frechet sense.

Bellow $M$ is a connected smooth compact  $d$-dimensional Riemannian  manifold or $\mathbb{R}^d$. Let $g$ be the Riemannian metric on $M$.
We will raise and lower indices using the metric $g$ and we will sum over repeated indices.
Let $E=E(\mathbb{C}^N,\pi,M,G)$  be a vector bundle over $M$ with the projection $\pi\colon E\to M$ 
and the structure group of $G\subseteq SU(N)$.
 The fiber over $x\in M$ is $E_x=\pi^{-1}(x)\cong\mathbb{C}^N$.
 Let  the Lie algebra of the  structure group
be $Lie (G)\subseteq su(N)$. Let $P$ be the principle bundle  over $M$  associated with $E$ and $ad (P)=Lie(G)\times_G M$   be the adjoint bundle of $P$ (the fiber of $adP$ is
  isomorphic to $Lie (G)$).
A connection $A$ in the vector bundle $E$ is  a smooth section in $\Lambda^1\otimes adP$.
If $W_a$ is an open subset of $M$ and  $\psi_a\colon \pi^{-1}(W_a)=W_a\times \mathbb{C}^N$ 
is a local trivialization of $E$ then in this local trivialization the connection $A$  is a smooth $Lie (G)$-valued 1-form $A^a(x)=A^a_\mu(x)dx^\mu=\psi_{a}A(x)\psi_{a}^{-1}$ on $W_a$.
Let $\psi_a\colon\pi^{-1}(W_a)\cong W_a\times \mathbb{C}^N$ and $\psi_b\colon\pi^{-1}(W_b)\cong W_b\times \mathbb{C}^N$ be two local trivializations of $E$ and $\psi_{ab}\colon W_a\cap W_b\to G$ be the transition function. It means that $\psi_{a}\circ \psi_{b}^{-1}(x,\xi)=(x,\psi_{a b}(x)\xi)$ for all $(x,\xi)\in (W_a\cap W_b)\times \mathbb{C}^N$.
Then for $x\in W_a\cap W_b$ the following holds
\begin{equation}
\label{connection}
A^b(x)=\psi_{ab}^{-1}(x)A^a(x)\psi_{ab}(x)+\psi_{ab}^{-1}(x)d\psi_{ab}(x).
\end{equation}
The connection define the covariant derivative $\nabla$. If $\phi$ is a smooth section in $ad P$ its
covariant derivative has the form
$$
\nabla_\mu\phi=\partial_\mu\phi+[A_\mu,\phi].
$$
The curvature $F$ of the connection $A$ is a smooth section in $\Lambda^2\otimes adP$.
In the local trivialization the curvature $F$ is the $Lie (G)$-valued 2-form $F^a(x)=\sum_{\mu<\nu} F^a_{\mu\nu}(x)dx^\mu\wedge dx^\nu=\psi_{a}F(x)\psi_{a}^{-1}$,
where  $F^a_{\mu\nu}=\partial_\mu A^a_\nu-\partial_\nu A^a_\mu+[A^a_\mu,A^a_\nu]$.
For $x\in W_a\cap W_b$ the following holds
\begin{equation}
\label{curvature}
F^b(x)=\psi_{ab}^{-1}(x)F^a(x)\psi_{ab}(x).
\end{equation}

The Yang-Mills action functional has the form
%исправить
\begin{equation}
\label{YMaction1}
S_{YM}(A)=-\frac 12\int_{M}tr(F_{\mu\nu}(x)F^{\mu\nu}(x))Vol(dx),
\end{equation}
where $Vol$ is the volume form on the manifold $M$.
The Euler-Lagrange equations for this action functional are
\begin{equation}
\label{YMequations}
\nabla^\mu F_{\mu\nu}=0.
\end{equation}
Locally,
 \begin{equation*}
 \nabla_\lambda F_{\mu
\nu}=\partial_\lambda F_{\mu \nu}+[A_\lambda,F_{\mu
\nu}]-F_{\mu
\kappa}\Gamma^\kappa_{\lambda\nu}-F_{\kappa\nu}\Gamma^\kappa_{\lambda\mu},
\end{equation*}
where $\Gamma^\kappa_{\lambda\nu}$ the Christoffel symbols of the Levy-Civita connection on $M$.
Equations~(\ref{YMequations}) are called  the Yang-Mills equations.

The Yang-Mills heat equations is a nonlinear parabolic differential equation on a time depended connection $A(\cdot,\cdot)\in C^{1,\infty}([0,T]\times M,\Lambda^1\otimes adP)$ (any partial derivative of $A(s,x)$ with respect to the  $x$ variable is jointly $C^1$ on $[0,T]\times M$)  
of the form
\begin{equation}
\label{ymh}
\partial_s A_\nu(s,x)=\nabla^\mu F_{\mu\nu}(s,x).
\end{equation}

For more information about these equations,
in particular, for the initial value problem, the weak solutions, the blow-ups of solutions and the questions related to the gauge choice see~\cite{Rade,ABT,Oh1,Oh2,Gross1}.

\section{Hilbert manifold of $H^1$-curves}

For any sub-interval $I\subset[0,1]$ the symbols  $H^0(I,\mathbb{R}^d)$  and $H^1(I,\mathbb{R}^d)$ denote the spaces of
$L_2$-functions and $H^1$-functions (absolutely continuous with finite energy) on $I$ with values in $\mathbb{R}^d$ respectively. Let  $$\|\gamma\|_0=(\int_I(\gamma(t),\gamma(t))_{\mathbb{R}^d}dt)^{\frac 12}$$ and $$\|\gamma\|_1=(\int_I(\gamma(t),\gamma(t))_{\mathbb{R}^d}dt+\int_I(\dot{\gamma}(t),\dot{\gamma}(t))_{\mathbb{R}^d}dt)^{\frac 12}$$
be the Hilbert norms on these spaces.

The mapping $\gamma\colon [0,1]\to M$ is called $H^1$-curve, if for any interval  $I\subset [0,1]$ and for any  coordinate chart
 $(\phi_a, W_a)$ of the manifold  $M$ such that  $\gamma(I)\subset W_a$, the following holds $\phi_a\circ \gamma\mid_I\in H^1(I,\mathbb{R}^d)$.
Let the symbol $\Omega$ denote the set of all $H^1$-curves in $M$.  For any $x\in M$ let $\Omega_x=\{\gamma\in \Omega\colon \gamma(0)=x\}$ and $\Omega_{x,x}=\{\gamma\in \Omega\colon \gamma(1)=x\}$.

Fix $\gamma\in \Omega$. 
The mapping  $X(\gamma;\cdot)\colon [0,1]\to TM$ such that   $X(\gamma;t)\in T_{\gamma(t)}M$ for any $t\in[0,1]$ is a vector field along  $\gamma$. We will also use the notation $X(\gamma)$ for $X(\gamma;\cdot)$.
Let the symbol $L^\infty_\gamma(TM)$ denote the Banach space of all $L_{\infty}$-fields along $\gamma$. %(with essentialy finite norms). 
The norm  $\|\cdot\|_\infty$  on this space  is defined by
$$
\|X(\gamma)\|_\infty=\mathrm{ess } \sup_{(t\in[0,1])}(\sqrt{g(X(\gamma;t),X(\gamma;t))}).
$$

The symbol  $H^0_\gamma(TM)$ denotes the Hilbert space of all $H^0$-fields along  $\gamma$. The scalar product 
on this space is defined by the formula
\begin{equation}
\label{metrG0}
G_{0}(X(\gamma),Y(\gamma))=\int_0^1g(X(\gamma;t),Y(\gamma;t))dt.
\end{equation}

If $X(\gamma)$ is an absolutely continuous field along  $\gamma\in \Omega$ 
its covariant derivative $\nabla X(\gamma)$ is the field along $\gamma$, 
defined by 
$$
\nabla X(\gamma;t)= \dot{X}(\gamma;t)+\Gamma(\gamma(t))(X(\gamma;t),\dot{\gamma}(t)),
$$
where $(\Gamma(x)(X,Y))^\mu=\Gamma^\mu_{\lambda \nu}(x)X^\lambda Y^\nu$ in local coordinates.
Let $Q(\gamma;\cdot)$ denote the parallel transport generated by the Levi-Civita connection  along the curve  $\gamma$. 
It is easy to show that
$$
\nabla{X}(\gamma;t)=Q(\gamma;t)\frac d{dt}(Q(\gamma;t)^{-1}X(\gamma;t)).
$$
The symbol 
$H^1_\gamma(TM)$ denotes the Hilbert space of all $H^1$-fields along  $\gamma$. The scalar product 
on this space is defined by the formula
\begin{equation}
\label{metrG1}
G_1(X(\gamma),Y(\gamma))=\int_0^1g(X(\gamma;t),Y(\gamma;t))dt
\\+\int_0^1g(\nabla X(\gamma;t),\nabla Y(\gamma;t))dt.
\end{equation}

The set $\Omega$  of all $H^1$-curves in $M$  can be endowed with the structure of a Hilbert manifold in the following way (see~\cite{Driver,Klingenberg,Klingenberg2}).
Let  $d(\cdot,\cdot)$ denote the distance on $M$  generated by the metric   $g$. Let $$W(\gamma,\varepsilon)=\{\sigma\in\Omega\colon d(\gamma(t),\sigma(t))<\varepsilon \text{  for all $t\in[0,1]$}\}.$$
Let  $\widetilde{W}(\gamma,\varepsilon)=\{X\in T_\gamma H^1([0,1],M):\|X\|_\infty<\varepsilon\}$.
Let  $exp_x$ denote the exponential mapping on the manifold $M$ at the point $x\in M$.
For $\gamma\in \Omega$ let  the mapping 
$$
exp_\gamma\colon  \widetilde{W}(\gamma,\varepsilon)\to \Omega
$$
be defined by the formula
$$
exp_\gamma(X)(t)=exp_{\gamma(t)}(X(t)).
$$
It is known that  $exp_\gamma$ is a bijection  between $W(\gamma,\varepsilon)$ and $\widetilde{W}(\gamma,\varepsilon)$. 
The structure of the Hilbert manifold on $\Omega$ is defined by the atlas  $(exp_\gamma^{-1}, W(\gamma,\varepsilon))$. 
The set $\Omega_x$ is a Hilbert  submanifold of $\Omega$ and the set  $\Omega_{x,x}$ is a Hibert submanifold of $\Omega_x$.

We consider two canonical vector bundles $\mathcal H^0$ and $\mathcal H^1$  over the Hilbert manifold $\Omega$
(see~\cite{Klingenberg,Klingenberg2}).
The fiber of $\mathcal H^0$ over  $\gamma\in \Omega$ is the space   $H^0_\gamma(TM)$ and $G_0(\cdot,\cdot)$ is a Riemannian metric on this bundle.
The fiber of  $\mathcal H^1$ over $\gamma\in \Omega$ is the space  $H^1_\gamma(TM)$ and  $G_1(\cdot,\cdot)$ is a Riemannian metric on this bundle. The vector bundle   $\mathcal H^1$ is the tangent bundle over the manifold   $\Omega$. Let $\mathcal H^1_{0,0}$ denote the subbundle of $\mathcal H^1$ such that
 the  fiber of  $\mathcal H^1_{0,0}$  over  $\gamma\in\Omega$ is the space   $\{X\in H^1_\gamma(TM)\colon X(0)=X(1)=0\}$.

A connection  in a vector bundle over an infinite-dimensional manifold  can be given by Christoffel symbols (see~\cite{Klingenberg,Klingenberg2,Lang}). If $\textbf{M}_0$ is a base Hilbert manifold modeled on a Hilbert space $\textbf{H}_0$ and $\textbf{E}_0$ is  a Hilbert vector bundle over $\textbf{M}_0$  with the  fiber  $\textbf{V}_0$ and the projection $\pi_0\colon \textbf{E}_0\to \textbf{M}_0$. If $\textbf{W}_0$ is 
a coordinate chart on $\textbf{M}_0$ then  $\textbf{E}_0$ has a local trivialization $\pi_0^{-1}(\textbf{W}_0)\cong  \textbf{W}_0\times \textbf{H}_0$ and the tangent bundle over $\textbf{M}_0$ has a local trivialization     $T\textbf{W}_0\cong \textbf{W}_0\times\textbf{H}_0$.
Then the Christoffel symbol $\Gamma^{\sim}$ of the connection in $\textbf{E}_0$ is a smooth function on $\textbf{W}_0$ with values in the space of continuous bilinear functionals from $\textbf{V}_0\times \textbf{H}_0$ to $\textbf{V}_0$. Under the coordinate transformations the Christoffel symbols are transformed in the similar way as in the finite-dimensional case.

The Levi-Civita connection on the $d$-dimensional manifold $M$ generates the canonical connection $\nabla^{\mathcal{H}^0}$ in the infinite-dimensional bundle $\mathcal {H}^0$.
 (We associate the connection and the covariant derivative generated by this connection).
Let $\sigma\in \Omega$. %(Let   $c\in C^\infty([0,1],M)$.)
The Cristoffel symbols  $\Gamma^{\sim}_\sigma$ of the connection $\nabla^{\mathcal{H}^0}$  in the coordinate chart  $(exp_\sigma^{-1}, W(\sigma,\varepsilon))$ are defined as  follows.
 For any  $t\in [0,1]$ we consider the normal coordinate chart on $M$ at the point 
$\sigma(t)$ and the Cristoffel  symbols $\Gamma_{\sigma(t)}$ of the Levi-Civita connection on $M$ in this coordinate chart. If $\gamma\in W(\sigma,\varepsilon)$,
 $X\in H^0_{\gamma}(TM)$ and $Y\in H^1_{\gamma}(TM)$, then $(\Gamma^{\sim}_\sigma(\gamma)(X,Y))(t)\in T_{\gamma(t)}M$ for almost all $t$  is defined by
\begin{equation}
\label{Crisstoffel}
(\Gamma^{\sim}_\sigma(\gamma)(X,Y))(t)=\Gamma_{\sigma(t)}(\gamma(t))(X(t),Y(t)).
\end{equation} 
in the normal coordinate chart on $M$ at the point 
$\sigma(t)$.
In~\cite{Klingenberg,Klingenberg2} it is proved that Cristoffel symbols~(\ref{Crisstoffel}) correctly
define the connection in the vector bundle  $\mathcal{H}^0$. 
Let $X\in C^\infty(W(\sigma,\varepsilon),\mathcal H^0)$ and $Y\in C^\infty(W(\sigma,\varepsilon),\mathcal H^1)$ then  in the normal coordinate chart on $M$ at the point 
$\sigma(t)$ we have the following expression for the covariant derivative
\begin{equation}
\label{nablaalpha0}
\nabla^{\mathcal H^0}_YX(\gamma;t)=d_YX(\gamma,t)+\Gamma_{\sigma(t)}(\gamma(t))(X(t),Y(t)).
\end{equation}

\begin{example}
\label{ex1}
Let  the section $\Upsilon$ in $\mathcal H^0$ be defined by $\Upsilon(\gamma;t)=\dot{\gamma}(t)$.
It  holds that  $\nabla^{\mathcal H^0}_Y\Upsilon(\gamma;t)=\nabla Y(\gamma;t)$ (see~\cite{Klingenberg,Klingenberg2}).
\end{example}

\begin{remark}
The connection $\nabla^{\mathcal H^0}$ is Riemannian. It means that for any smooth local sections $Y,Z$  in $\mathcal H^0$  and smooth local section $X$ in $\mathcal H^1$ the following holds
\begin{equation}
d_XG_0(Y,Z)=G_0(\nabla^{\mathcal{H}^0}_XY,Z)+G_0(Y,\nabla^{\mathcal{H}^0}_XZ).
\end{equation}
\end{remark}

\section{First derivative  and $H^0$-gradient of parallel transport}

  Let $\mathcal E$ be the vector bundle over   $\Omega$, that its fiber  over $\gamma\in \Omega$ is the space $L(E_{\gamma(0)},E_{\gamma(1)})$ of all linear mappings from $E_{\gamma(0)}$ to $E_{\gamma(1)}$.  The parallel 
  transport $U_{1,0}$ generated by the connection $A$ in $E$ can been considered as a section in  $\mathcal E$. 
 Let $\psi_a\colon \pi^{-1}(W_a)\cong W_a\times \mathbb{C}^N$ be a local trivialization of the vector bundle $E$ and let $A^a$ be a local 1-form of the connection $A$ on the open set $W_a\subset M$. For $\gamma\in \Omega$ such that $\gamma([0,1])\subset  W_a$ we can consider the system of differential equations
\begin{equation}
\label{partransp1}
 \left\{
\begin{aligned}
 \frac d{dt}U^a_{t,s}(\gamma)=-A^a_\mu(\gamma(t))\dot{\gamma}^\mu(t)U^a_{t,s}(\gamma)\\
  \frac d{ds}U^a_{t,s}(\gamma)=U^a_{t,s}(\gamma)A^a_\mu(\gamma(s))\dot{\gamma}^\mu(s)\\
\left.U^a_{t,s}\right|_{t=s}=Id.
\end{aligned}
\right.
\end{equation}
Then $U_{1,0}(\gamma)=\psi^{-1}_a U^a_{1,0}(\gamma)\psi_a$ is the parallel transport along $\gamma$  generated by the connection $A$.
If $\gamma([s,t])\in W_a\cap W_b$ and $A^a$ and $A^b$ are the local 1-forms of the connection $A$ on the open sets $W_a$ and $W_b$ respectively then equality~(\ref{connection})
implies that 
\begin{equation}
\label{partransp}
U^a_{t,s}(\gamma)=\psi_{ab}(\gamma(t))U^b_{t,s}(\gamma)\psi_{ba}(\gamma(s)).
\end{equation}
For arbitrary $\gamma\in \Omega$ let consider the partition $c=t_1\leq t_2\leq\ldots t_n=d$ such that
$\gamma([t_i,t_{i+1}])\subset W_{a_i}$  and the family of local trivializations $\psi_{a_i} \colon  \pi^{-1}(W_{a_i})\cong   W_{a_i}\times \mathbb{C}^N$ of the vector bundle  $E$.
 Let
\begin{equation}
\label{paraltrans}
U^{a_{n},a_{1}}_{d,c}(\gamma)=U_{t_{n},t_{n-1}}^{a_{n-1}}
(\gamma)\psi_{a_n a_{n-1}}(\gamma(t_{n-1}))\ldots
U_{t_{3},t_{2}}^{a_{2}}(\gamma)\psi_{a_{2}a_{1}}(\gamma(t_{2}))U_{t_{2},t_{1}}^{a_{1}}(\gamma).
\end{equation}
then $U_{d,c}(\gamma)=\psi^{-1}_{a_n} U^{a_n,a_1}_{d,c}(\gamma)\psi_{a_1}$ and $U_{1,0}(\gamma)$  is a parallel transport along $\gamma$.
By~(\ref{partransp}), the definition of parallel transport does not depend on the choice of the partition and the choice 
of the family of trivializations.
In~\cite{Driver} it is proved that the mapping $\Omega\ni \gamma\to U_{1,0}(\gamma)$ is a smooth section in the vector bundle $\mathcal E$.  The parallel transport does not depend on the choice of parametrization of the curve $\gamma$ 
and $U_{t,s}(\gamma)$ coincide with the parallel transport along the restriction of $\gamma$ on the interval $[s,t]$.
Also, parallel transfer satisfies the multiplicative property:
\begin{equation}
\label{group}
U_{t,s}(\gamma)U_{s,r}(\gamma)=U_{t,r}(\gamma)\text{ for $r\leq s\leq t$}.
\end{equation}

Let $g_{\mathcal E}$ and $g_{\mathcal{H}^0 \otimes \mathcal{E}}$  denote the natural Riemannian metrics in the bundle $\mathcal E$ and $\mathcal{H}^0\otimes \mathcal{E}$ respectively.
 If $X, Y$ are local sections in $\mathcal E$ and $\Phi, \Psi$ are local sections
 in $\mathcal{H}^0$ then
 $$
g_{\mathcal E}(\Phi, \Psi)=-tr(\Phi\Psi),
$$
$$
g_{\mathcal{E}\otimes \mathcal{H}^0}(X\otimes \Phi,Y\otimes \Psi)=G_0(X,Y)g_{\mathcal{E}}(\Phi,\Psi).
$$

\begin{definition}
\label{H_0grad}
The domain $dom\, grad_{H^0}$ of the $H^0$-gradient  consists of all $\varphi\in C^\infty(\Omega,\mathcal{E})$
 such that there exists $J_\varphi\in C^\infty(\Omega,\mathcal{E}\otimes \mathcal{H}^0)$, that 
the following equality holds
%$$
%G_0(grad_{\alpha_0}\varphi(\gamma), X(\gamma)\otimes \Phi(\gamma))=tr(d_{X} \varphi(\gamma)\Phi^\ast(\gamma))
%$$
$$
g_{\mathcal{E}\otimes \mathcal{H}^0}(J_\varphi(\gamma), X(\gamma)\otimes \Phi(\gamma))=g_{\mathcal{E}}(d_{X} \varphi(\gamma),\Phi(\gamma))
$$
for any $\gamma\in \Omega$, any local smooth section $X$ in $\mathcal{H}_{0,0}^1$ and any local smooth section $\Phi$ in the bundle $\mathcal{E}$.

 The $H^0$-gradient is a linear mapping $grad_{H^0}\colon dom\, grad_{H^0} \to  C^\infty(\mathcal{E}\otimes \mathcal{H}^0)$
defined by the formula
$$
grad_{H^0}\varphi=J_\varphi.
$$
\end{definition}

\begin{remark}
Any connection $B$ in $E$ generates the connection in  $\mathcal E$ such that (see~\cite{LV2001})
$$
\nabla^B_Y \Psi(\gamma)=d_Y\Psi(\gamma)+B_\mu(\gamma(1))Y^\mu(\gamma;1)\Psi(\gamma)-\Psi(\gamma)B_\mu(\gamma(0))Y^\mu(\gamma;0).
$$
If $Y$ is a section in $\mathcal{H}_{0,0}^1$ then $\nabla^B_Y \Psi(\gamma)=d_Y\Psi(\gamma)$. So the definition of the $H^0$-gradient is covariant.
\end{remark}

\begin{example}
\label{ex2}
Let $f\in C^\infty(M,\mathbb{R})$ and
$L^{f}\colon \Omega\to \mathbb{R}$
is defined by
$$
L_f(\gamma)=\int_0^1f(\gamma(t))dt.
$$
Then,
$$
grad_{H^0}L_f(\gamma;t)=\nabla f(\gamma(t)),
$$
where $\nabla$ is the gradient on the manifold $M$.
\end{example}

The following lemma  is Duhamel's Principle (see~\cite{Driver}).
\begin{lemma}
\label{Duhamel}
 Let $V$ be a finite dimensional inner product space. For any $Z\in L_2([c,d],End(V))$ there exists  a unique $P(Z)\in H^1([c,d], End(V))$ such that $\frac {d}{dt}P(Z;t)=-Z(t)P(Z;t)$ for almost all $t$ and $P(Z;c)=Id$.
The mapping $L_2([c,d], End(V))\ni Z\mapsto P(Z;d)$ is $C^\infty$-smooth and
 $P(Z;t)\in Aut(V)$ for all $t\in[c,d]$.
Furthermore, the first derivative of $P$ has the form
\begin{equation}
\label{Duhamels Principle}
<DP(Z;d),\delta Z>=-P(Z;d)\int_{c}^{d}P^{-1}(Z;t)\delta Z(t)P(Z;t)dt.
\end{equation}
\end{lemma}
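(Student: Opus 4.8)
The plan is to prove the four assertions in the order they are stated: existence and uniqueness of $P(Z)\in H^1$, invertibility $P(Z;t)\in Aut(V)$, the $C^\infty$-smoothness of $Z\mapsto P(Z;d)$, and finally the derivative formula, with the last two being the substantive parts.

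For existence and uniqueness I would rewrite the equation in Volterra integral form
\begin{equation*}
P(t)=Id-\int_c^t Z(\tau)P(\tau)\,d\tau .
\end{equation*}
Since $[c,d]$ is compact, Cauchy--Schwarz gives $Z\in L_1([c,d],End(V))$, so the Volterra operator $(K_ZP)(t)=\int_c^t Z(\tau)P(\tau)\,d\tau$ is bounded on $C([c,d],End(V))$ and its $n$-th iterate has norm at most $\|Z\|_{L_1}^n/n!$. Hence $Id+K_Z$ is invertible on $C([c,d],End(V))$ by a Neumann series, and the integral equation has a unique continuous solution $P$. Because $P$ is continuous (thus bounded) and $Z\in L_2$, the product $ZP$ lies in $L_2$, so $\dot P=-ZP\in L_2$ and $P\in H^1$. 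For invertibility I would introduce the companion solution $R$ of $\dot R=RZ$, $R(c)=Id$, constructed in the same way; then $\frac{d}{dt}(RP)=RZP-RZP=0$ almost everywhere and $R(c)P(c)=Id$, so $R(t)P(t)=Id$ for all $t$, and in finite dimensions a left inverse is a two-sided inverse, whence $P(t)\in Aut(V)$.

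The main work is the smoothness of $Z\mapsto P(Z;d)$, which I would obtain from the implicit function theorem in Banach spaces. Consider
\begin{equation*}
\mathcal F\colon L_2([c,d],End(V))\times C([c,d],End(V))\to C([c,d],End(V)),\qquad \mathcal F(Z,P)=P-Id+K_ZP .
\end{equation*}
The map $(Z,P)\mapsto K_ZP$ is bounded bilinear, via the estimate $\|K_ZP\|_\infty\le\|Z\|_{L_1}\|P\|_\infty\le\sqrt{d-c}\,\|Z\|_{L_2}\|P\|_\infty$, so $\mathcal F$ is $C^\infty$, and its partial derivative in $P$ is $D_P\mathcal F(Z,P)=Id+K_Z$, which is invertible by the first step. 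The implicit function theorem then yields that $Z\mapsto P(Z)\in C([c,d],End(V))$ is $C^\infty$, and composing with the bounded evaluation $P\mapsto P(d)$ gives the claim. Equivalently, one could expand $P(Z;d)$ as the time-ordered Dyson series $Id+\sum_{n\ge1}(-1)^n\int_{c\le\tau_1\le\cdots\le\tau_n\le d}Z(\tau_n)\cdots Z(\tau_1)\,d\tau_1\cdots d\tau_n$, whose $n$-th term is a bounded $n$-linear function of $Z$ of norm at most $\|Z\|_{L_1}^n/n!$; uniform convergence exhibits $Z\mapsto P(Z;d)$ as real-analytic, hence $C^\infty$.

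Finally, for the derivative I would differentiate the defining equation in the direction $\delta Z$. Writing $Q(t)=\langle DP(Z;t),\delta Z\rangle$, the chain rule (legitimate by the previous step) gives the inhomogeneous linear equation $\dot Q=-\delta Z\,P-Z\,Q$ with $Q(c)=0$. Solving by variation of parameters, set $Q=PW$ with $W(c)=0$; then $P\dot W=-\delta Z\,P$, so $\dot W=-P^{-1}\,\delta Z\,P$ and $W(t)=-\int_c^t P^{-1}(\tau)\,\delta Z(\tau)\,P(\tau)\,d\tau$. Evaluating $Q=PW$ at $t=d$ produces exactly~(\ref{Duhamels Principle}). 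The only genuinely delicate point is the infinite-dimensional differential calculus of the smoothness step --- verifying the bilinear estimate, the invertibility of $Id+K_Z$, and the hypotheses of the implicit function theorem; the remaining assertions reduce to standard linear ODE theory.
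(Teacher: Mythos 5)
Your proof is correct and complete, but it follows a genuinely different route from the paper, whose own ``proof'' is only a sketch deferring the details to Driver's paper. The paper's single device is the exact two-solution identity
\begin{equation*}
P(Z_2;d)-P(Z_1;d)=-P(Z_2;d)\int_c^d P^{-1}(Z_2;t)\bigl(Z_2(t)-Z_1(t)\bigr)P(Z_1;t)\,dt,
\end{equation*}
obtained by integrating $\frac{d}{dt}\bigl(P^{-1}(Z_2;t)P(Z_1;t)\bigr)$; the derivative formula~(\ref{Duhamels Principle}) then falls out by letting $Z_2\to Z_1$, since the right-hand side is already an exact (not merely first-order) Duhamel expansion. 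You instead work entirely from the Volterra integral equation: a Neumann-series argument for existence, uniqueness and the invertibility of $Id+K_Z$; the implicit function theorem (or the Dyson series) for the $C^\infty$-smoothness; and variation of parameters on the linearized equation $\dot Q=-\delta Z\,P-ZQ$ for the derivative. Your approach is more systematic and actually delivers the smoothness assertion, which the paper's sketch does not address at all, at the cost of invoking Banach-space calculus; the paper's identity is the slicker way to the derivative formula alone, since it packages the difference quotient exactly. The only point worth tightening in your write-up is the phrase ``the chain rule (legitimate by the previous step)'': the clean justification is to differentiate the integral equation $P=Id-K_ZP$ itself, which gives $Q=-K_{\delta Z}P-K_ZQ$ directly and is exactly the integral form of your linearized ODE.
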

\begin{proof}
For clarity, we present the idea of the proof. For the complete proof see~\cite{Driver}.
Consider the function $R(t)=P^{-1}(Z_2;t)P(Z_1;t)$ then
\begin{multline}
\label{dR}
\frac {d}{dt} R(t)=\frac {d}{dt} P^{-1}(Z_2,t)P(Z_1;t)+P^{-1}(Z_2,t)\frac {d}{dt}P(Z_1;t)=\\=P^{-1}(Z_2;t)(Z_2(t)-Z_1(t))P(Z_1;t)
\end{multline}
and
\begin{equation}
\label{Pr}
P(Z_2;d)(R(1)-R(0))=P(Z_2;d)(P^{-1}(Z_2;d)P(Z_1;d)-Id)=P(Z_1;d)-P(Z_2;d).
\end{equation}
Together~(\ref{dR}) and~(\ref{Pr}) imply  the formula
$$
P(Z_2;d)-P(Z_1;d)=-P(Z_2;d)\int_c^dP^{-1}(Z_2;t)(Z_2(t)-Z_1(t))P(Z_1;t)dt.
$$
The statement of the proposition can been deduced from this formula.
\end{proof}

\begin{proposition}
\label{prop1deriv}
The first derivative of the parallel  transport has the form
\begin{multline}
\label{firstderpar}
d_XU_{d,c}(\gamma)=-\int_{c}^{d}U_{d,t}(\gamma)F_{\mu \nu}(\gamma(t))X^\mu(\gamma;t)\dot{\gamma}^\nu(t)U_{t,c}(\gamma)dt-\\-A_\mu(\gamma(d))X^\mu(\gamma;d)U_{d,c}(\gamma)
+U_{d,c}(\gamma)A_\mu(\gamma(c))X^\mu(\gamma;c).
\end{multline}

\end{proposition}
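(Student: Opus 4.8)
The plan is to combine Duhamel's principle (Lemma~\ref{Duhamel}) in a single local trivialization with the multiplicative property~(\ref{group}) to reach an arbitrary curve. First I would fix a trivialization $\psi_a$ over a chart $W_a$ with $\gamma([c,d])\subset W_a$ and put $Z(t)=A^a_\mu(\gamma(t))\dot\gamma^\mu(t)$. Comparing the first equation of~(\ref{partransp1}) with the defining ODE of Lemma~\ref{Duhamel}, I identify $U^a_{t,c}(\gamma)=P(Z;t)$, hence $U^a_{d,c}(\gamma)=P(Z;d)$ and $P^{-1}(Z;t)=U^a_{c,t}(\gamma)$. Varying the curve in the direction $X$ within the chart produces the variation $\delta Z(t)=\partial_\nu A^a_\mu(\gamma(t))X^\nu(t)\dot\gamma^\mu(t)+A^a_\mu(\gamma(t))\dot X^\mu(t)$, so the chain rule together with~(\ref{Duhamels Principle}) and the group law $U^a_{d,c}U^a_{c,t}=U^a_{d,t}$ gives
\begin{equation*}
d_XU^a_{d,c}(\gamma)=-\int_c^d U^a_{d,t}(\gamma)\,\delta Z(t)\,U^a_{t,c}(\gamma)\,dt.
\end{equation*}

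The computational heart is to turn the $\dot X^\mu$ term into a curvature term. I would integrate by parts in the family $t\mapsto U^a_{d,t}(\gamma)A^a_\mu(\gamma(t))U^a_{t,c}(\gamma)$. Because $U^a_{d,d}=U^a_{c,c}=\mathrm{Id}$, the boundary contributions are exactly $-A^a_\mu(\gamma(d))X^\mu(d)U^a_{d,c}(\gamma)$ and $U^a_{d,c}(\gamma)A^a_\mu(\gamma(c))X^\mu(c)$, i.e.\ the last two terms of the statement. Differentiating that family by means of the two equations in~(\ref{partransp1}) and of $\frac{d}{dt}A^a_\mu(\gamma(t))=\partial_\nu A^a_\mu(\gamma(t))\dot\gamma^\nu(t)$, the two transport factors combine into the commutator $[A^a_\nu,A^a_\mu]$. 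Adding the resulting interior integrand to the $\partial_\nu A^a_\mu$ term coming from $\delta Z$ and relabelling the summation indices so that $X$ carries the index $\mu$ and $\dot\gamma$ the index $\nu$, the bracket collapses to $-(\partial_\mu A^a_\nu-\partial_\nu A^a_\mu+[A^a_\mu,A^a_\nu])=-F^a_{\mu\nu}$, which yields the integral term $-\int_c^d U^a_{d,t}F^a_{\mu\nu}X^\mu\dot\gamma^\nu U^a_{t,c}\,dt$. I expect this step to be the main obstacle: the curvature only emerges after the integration by parts, the cancellation of the $A^a_\mu A^a_\nu$ products against the $\partial A$ terms, and a careful tracking of signs and index contractions.

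Finally, for a general $\gamma\in\Omega$ whose image is not contained in one chart, I would choose a partition $c=t_1\le\dots\le t_n=d$ with each $\gamma([t_i,t_{i+1}])$ inside a chart, write $U_{d,c}(\gamma)$ as the product~(\ref{paraltrans}) governed by~(\ref{group}), and differentiate by the Leibniz rule. Each factor contributes its single-chart formula; repeatedly using $U_{d,t}U_{t,s}=U_{d,s}$ the curvature integrals reassemble into one integral over $[c,d]$, while the boundary terms produced at each interior node $t_i$ occur twice with opposite signs and cancel, leaving only the endpoint terms at $c$ and $d$. The transition rules~(\ref{curvature}) and~(\ref{partransp}) guarantee that the curvature integrand is independent of the chosen trivializations, so passing to the coordinate-free notation gives the stated formula for $d_XU_{d,c}(\gamma)$.
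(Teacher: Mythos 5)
Your proposal is correct and follows essentially the same route as the paper: Duhamel's principle (Lemma~\ref{Duhamel}) in a single trivialization, integration by parts to convert the $\dot X^\mu$ term into the curvature plus the two endpoint terms, and then the partition/Leibniz argument via~(\ref{paraltrans}) for a general curve. The one place where your description is looser than the paper's is the cancellation at interior nodes: the two adjacent boundary terms there live in different trivializations, so they do not cancel by sign alone --- one must also include the term $\partial_\mu\psi_{a_{i+1}a_i}(\gamma(t_i))X^\mu$ coming from differentiating the transition function and invoke the gauge transformation law~(\ref{connection}), which is exactly how the paper makes that combination vanish.
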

\begin{proof}
%For clarity, we present the idea of the proof. For the complete proof .
Consider the partition $c=t_1<t_2<\ldots<t_{n-1}<t_n=d$ and the family of local trivializations $\psi_{a_i}\colon  \pi^{-1}(W_{a_i}) \cong W_{a_i}\times \mathbb{C}^N$ of the vector bundle  $E$ such that
$\gamma([t_{i},t_{i+1}]) \subset W_{a_i}$.
 Lemma~\ref{Duhamel} implies
$$
d_XU^{a_i}_{t_{i+1},t_{i}}(\gamma)=\int_{t_{i}}^{t_{i+1}}U^{a_i}_{t_{i+1},t}(\gamma)(-\partial_{\mu} A^{a_i}_\nu(\gamma(t))X^\mu(\gamma;t)\dot{\gamma}^\nu(t)-A_\mu(\gamma(t))\dot{X}^\mu(\gamma;t))U^{a_i}_{t,t_{i}}(\gamma)dt.
$$
Integrating by parts we have:
\begin{multline}
d_XU^{a_i}_{t_{i+1},t_{i}}(\gamma)=-\int_{t_{i}}^{t_{i+1}}U^{a_i}_{t_{i+1},t}(\gamma)F^{a_i}_{\mu \nu}(\gamma(t))X^\mu(\gamma;t)\dot{\gamma}^\nu(t)U^{a_i}_{t,t_{i}}(\gamma)dt-\\-A^{a_i}_\mu(\gamma(t_{i+1}))X^\mu(\gamma;t_{i+1})U^{a_i}_{t_{i+1},t_{i}}(\gamma)
+U^{a_i}_{t_{i+1},t_{i}}(\gamma)A^{a_{i}}_\mu(\gamma(t_{i}))X^\mu(\gamma;t_{i}).
\end{multline}
Also we have
$$
d_X \psi_{a_{i+1}a_{i}}(\gamma(t_i))=\partial_\mu \psi_{a_{i+1}a_{i}}(\gamma(t_i))X^\mu(\gamma,t_i).
$$
Then
\begin{multline}
\label{dUU}
d_XU^{a_{i+1},a_{i}}_{t_{i+2},t_{i}}(\gamma)=d_X(U^{a_{i+1}}_{t_{i+2},t_{i+1}}(\gamma)\psi_{a_{i+1}a_{i}}(\gamma(t_{i+1}))U^{a_i}_{t_{i+1},t_{i}}(\gamma))=\\
=-\psi_{a_{i+1}}(\int_{t_{i}}^{t_{i+2}}U_{t_{i+1},t}(\gamma)F_{\mu \nu}(\gamma(t))X^\mu(\gamma;t)\dot{\gamma}^\nu(t)U_{t,t_{i}}(\gamma)dt)\psi_{a_{i}}^{-1}-\\
-A^{a_{i+1}}_\mu(\gamma(t_{i+2}))X^\mu(\gamma;t_{i+2})U^{a_{i+1},a_{i}}_{t_{i+2},t_{i}}(\gamma)
+U^{a_{i+1},a_{i}}_{t_{i+2},t_{i}}(\gamma)A^{a_{i}}_\mu(\gamma(t_{i}))X^\mu(\gamma;t_{i})+\\
+U^{a_{i+1}}_{t_{i+2},t_{i+1}}(\gamma)(A^{a_{i+1}}_\mu(\gamma(t_i))\psi_{a_{i+1}a_i}(\gamma(t_i))-\psi_{a_{i+1}a_i}(\gamma(t_i))A^{a_i}_\mu(\gamma(t_i))+\\+\partial_\mu \psi_{a_{i+1}a_{i}}(\gamma(t_i)))X^\mu(\gamma,t_i)U^{a_i}_{t_{i+1},t_{i}}(\gamma).
\end{multline}
Due to~(\ref{connection}) the last summand in~(\ref{dUU}) is equal to zero.
So Leibniz's rule for~(\ref{paraltrans})
 implies 
\begin{multline*}
d_XU^{a_{n},a_{1}}_{d,c}(\gamma)
=-\psi_{a_{n}}(\int_{c}^{d}U_{d,t}(\gamma)F_{\mu \nu}(\gamma(t))X^\mu(\gamma;t)\dot{\gamma}^\nu(t)U_{t,c}(\gamma)dt)\psi_{a_{1}}^{-1}-\\
-A^{a_{n}}_\mu(\gamma(d))X^\mu(\gamma;d)U^{a_{n}a_1}_{d,c}(\gamma)
+U^{a_{n}a_1}_{d,c}(\gamma)A^{a_{1}}_\mu(\gamma(c))X^\mu(\gamma;c)
\end{multline*}
and the statement of the proposition.
\end{proof}

The following proposition is a  direct corollary of Proposition~\ref{prop1deriv}.

\begin{proposition}
The following holds
$$
grad_{H^0}U_{1,0}(\gamma;t)^\mu=-U_{1,t}(\gamma)F^{\mu}_{\ \nu}(\gamma(t))\dot{\gamma}^\nu(t)U_{t,0}(\gamma).
$$
\end{proposition}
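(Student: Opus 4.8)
The plan is to produce the candidate gradient explicitly and to verify that it satisfies the defining identity of Definition~\ref{H_0grad} against every admissible pair of test sections; since that identity determines $grad_{H^0}U_{1,0}$ uniquely, this proves the formula. The decisive simplification is that the $H^0$-gradient is tested only against sections $X$ of $\mathcal H^1_{0,0}$, for which $X(\gamma;0)=X(\gamma;1)=0$. Applying Proposition~\ref{prop1deriv} with $c=0$ and $d=1$, the two boundary terms $-A_\mu(\gamma(1))X^\mu(\gamma;1)U_{1,0}(\gamma)$ and $U_{1,0}(\gamma)A_\mu(\gamma(0))X^\mu(\gamma;0)$ both drop out, leaving the clean integral representation
\begin{equation*}
d_XU_{1,0}(\gamma)=-\int_0^1 U_{1,t}(\gamma)F_{\mu\nu}(\gamma(t))X^\mu(\gamma;t)\dot\gamma^\nu(t)U_{t,0}(\gamma)\,dt .
\end{equation*}

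Next I would compute the right-hand side of the defining identity. Using $g_{\mathcal E}(\Psi,\Phi)=-tr(\Psi\Phi)$ and the linearity of trace and integral, I obtain
\begin{equation*}
g_{\mathcal E}(d_XU_{1,0}(\gamma),\Phi(\gamma))=\int_0^1 tr\bigl(U_{1,t}(\gamma)F_{\mu\nu}(\gamma(t))\dot\gamma^\nu(t)U_{t,0}(\gamma)\Phi(\gamma)\bigr)X^\mu(\gamma;t)\,dt .
\end{equation*}
On the other side, I set $J^\mu(\gamma;t)=-U_{1,t}(\gamma)F^{\mu}_{\ \nu}(\gamma(t))\dot\gamma^\nu(t)U_{t,0}(\gamma)$, regard $J(\gamma)$ as the field of $\mathcal E\otimes\mathcal H^0$ with these components, and expand the pairing using $g_{\mathcal E\otimes\mathcal H^0}(X\otimes\Phi,Y\otimes\Psi)=G_0(X,Y)g_{\mathcal E}(\Phi,\Psi)$ together with $G_0(\cdot,\cdot)=\int_0^1 g(\cdot,\cdot)\,dt$. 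Bilinear extension over the $t$-dependent $\mathcal E$-part gives
\begin{equation*}
g_{\mathcal E\otimes\mathcal H^0}(J(\gamma),X(\gamma)\otimes\Phi(\gamma))=\int_0^1 g_{\mu\lambda}(\gamma(t))X^\lambda(\gamma;t)\,tr\bigl(U_{1,t}(\gamma)F^{\mu}_{\ \nu}(\gamma(t))\dot\gamma^\nu(t)U_{t,0}(\gamma)\Phi(\gamma)\bigr)\,dt .
\end{equation*}
The two expressions coincide once the index is lowered via $g_{\mu\lambda}F^{\mu}_{\ \nu}=F_{\lambda\nu}$, which is exactly the bookkeeping that converts the raised-index curvature in the claimed formula into the lowered-index curvature coming from Proposition~\ref{prop1deriv}. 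By the uniqueness of $J_\varphi$ in Definition~\ref{H_0grad}, this identifies $grad_{H^0}U_{1,0}$ with $J$.

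The genuinely substantive point, as opposed to the algebra, is checking that this $J$ is an admissible object, i.e. a smooth section of $\mathcal E\otimes\mathcal H^0$ lying in $dom\,grad_{H^0}$. For fixed $\gamma$ the field $t\mapsto J^\mu(\gamma;t)$ is $L_2$ in $t$, because $\dot\gamma\in H^0$ while $F$, the metric, and the parallel transports $U_{1,t},U_{t,0}$ are bounded and continuous along the (compact) image of $\gamma$; hence $J(\gamma)$ lies in the fibre of $\mathcal E\otimes\mathcal H^0$ over $\gamma$ as required. Smoothness of $J$ in $\gamma$ follows from the smoothness of $\gamma\mapsto U_{1,0}(\gamma)$ and of the two-parameter flow $U_{t,s}(\gamma)$ established through Duhamel's principle (Lemma~\ref{Duhamel}). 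I expect this regularity verification, together with keeping the contraction conventions consistent so that the single free index $\mu$ in the statement occupies the tangent-direction slot of $\mathcal H^0$, to be the only places demanding care; everything else is a direct corollary of Proposition~\ref{prop1deriv}.
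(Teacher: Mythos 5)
Your proposal is correct and is essentially the paper's own argument: the paper dismisses this proposition as a "direct corollary" of Proposition~\ref{prop1deriv}, and what you supply is exactly the intended verification — the boundary terms in~(\ref{firstderpar}) vanish because the test sections lie in $\mathcal H^1_{0,0}$, and the remaining integral kernel, after lowering the index with $g_{\mu\lambda}F^{\mu}_{\ \nu}=F_{\lambda\nu}$, matches the defining identity of Definition~\ref{H_0grad} against the $G_0$-pairing. Your added remarks on $L_2$-membership of $t\mapsto J(\gamma;t)$ and smoothness in $\gamma$ (via Lemma~\ref{Duhamel}) are sound and slightly more careful than the paper, which leaves them implicit.
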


\begin{remark}
The first derivative of the parallel transport is well known in literature (see for example~\cite{Gross,Driver}).
The non-commutative Stokes formula is based on formula~(\ref{firstderpar})  for $X(0)=X(1)$  (see~\cite{Arefieva1980} and  also Remark 2.10 in~\cite{Gross}).
\end{remark}

\section{Covariant Levy divergence and Levy Laplacian}

%\section{Levy divergence and Levy Laplacian as the integral functionals}
Let $\hat{\otimes}^2T^*M$  and $\wedge^2T^*M$ be the bundles of the symmetic and antisymmetic tensors  of type $(0,2)$ over $M$ respectively. Let  $\mathcal R_1$ be  the vector bundle over  $\Omega$, which fiber over  $
\gamma\in \Omega$ is the space of all $H^0$-sections in $\hat{\otimes}^2T^*M$ along $\gamma$.
Let  $\mathcal R_2$ be  the vector bundle over  $\Omega$, which fiber over  $
\gamma\in \Omega$ is the space of all $H^1$-sections in $\wedge^2T^*M$ along $\gamma$.

Let $C^{\infty}_{AGV}(\Omega,\mathcal H^1_{0,0}\otimes \mathcal H^1_{0,0}\otimes \mathcal E)$ denote the space of all sections $K$ in 
$\mathcal H^1_{0,0}\otimes \mathcal H^1_{0,0}\otimes \mathcal E$ 
that have the form
\begin{multline}
\label{AGVtensors}
K(\gamma)<X,Y>=\int_0^1\int_0^1 K^V(\gamma;s,t)<X(\gamma;t),Y(\gamma;s)>dsdt+\\
+\int_0^1 K^L(\gamma;t)<X(\gamma;t),Y(\gamma;t)>dt+\\
+\frac 12\int_0^1 K^S(\gamma;t)<\nabla X(\gamma;t), Y(\gamma;t)>dt+\\
+\frac 12\int_0^1 K^S(\gamma;t)<\nabla Y(\gamma;t), X(\gamma;t)>dt,
\end{multline}
where $K^L\in C^\infty(\Omega,\mathcal R_1\otimes \mathcal E)$,
$K^S\in C^\infty(\Omega,\mathcal R_2\otimes \mathcal E)$ and
$K^V\in C^\infty(\Omega,\mathcal H^0\otimes \mathcal H^0 \otimes \mathcal{E})$.

\begin{remark}
Tensors of the type~(\ref{AGVtensors}) were in fact  considered by Accardi, Gibilisco and Volovich  in~\cite{AGV1993,AGV1994}.
The kernel $K^V$ is called the Volterra kernel, $K^L$ is called the L\'{e}vy kernel and $K^S$ is called the singular kernel.
By analogy with~\cite{AGV1994,LV2001} it can be proved that these kernels are uniquely defined.
\end{remark}

\begin{definition}
The domain $dom\, div_L$ of the (covariant) Levy divergence consists of all $\psi\in C^\infty(\mathcal{H}^0 \otimes \mathcal{E})$ 
such that there exists $K_\psi\in C^{\infty}_{AGV}(\Omega,\mathcal H^1_{0,0}\otimes \mathcal H^1_{0,0}\otimes \mathcal E)$ that 
the following holds
\begin{equation}
g_{\mathcal H^0\otimes \mathcal{E}} (\nabla^{\mathcal H^0}_{X}\psi(\gamma), Y(\gamma)\otimes \Phi(\gamma))=
g_{\mathcal E}(K_\psi(\gamma)<X(\gamma),Y(\gamma)>,\Phi(\gamma))
\end{equation}
for any $\gamma\in \Omega$, for any   local sections  $X,Y$ in $\mathcal{H}^1_{0,0}$ and any local section $\Phi$ in $\mathcal E$

 The Levy divergence is a linear mapping $div_L\colon dom\, div_L \to  C^\infty(\mathcal E)$
defined by the formula
$$
div_L \psi(\gamma)=\int_0^1K^L_{\psi\,\mu\nu}(\gamma;t)g^{\mu\nu}(\gamma(t))dt,
$$
where $K^L_{\psi}$ is the Levy kernel of the $K_\psi$.
\end{definition}

\begin{remark}
The notion of the Levy divergence was in fact introduced in~\cite{AV1981}. 
See~\cite{Volkov2017,VolkovVINITI,Volkov2019}  for more information about the connection of this divergence  with the Yang-Mills fields.
\end{remark}

\begin{definition}
\label{LevyLapl}
  The value of the Levy Laplacian $\Delta_L$ on $\varphi\in C^\infty(\Omega,\mathcal{E})$
is defined by
\begin{equation}
\label{covLaplaceL}
\Delta_L \varphi=div_L (grad_{H^0} \varphi).
\end{equation}
\end{definition}

\begin{example}
\label{ex3}
Let $L_f\colon \Omega\to \mathbb{R}$ be defined as in Example~\ref{ex2}.
Then
$$
\Delta_L L_f(\gamma)=\int_0^1\Delta_{(M,g)}f(\gamma(t))dt,
$$
where $\Delta_{(M,g)}$ is the Laplace-Beltrami  operator on the manifold $M$.
\end{example}

\begin{theorem}
The following holds
\begin{equation}
\label{LofU}
\Delta_L U_{1,0}(\gamma)=-\int_0^1U_{1,t}(\gamma)\nabla^\mu F_{\mu \nu}(\gamma(t))\dot{\gamma}^\nu(t)U_{t,0}(\gamma)dt.
\end{equation}
\end{theorem}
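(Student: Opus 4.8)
The plan is to compute $\Delta_L U_{1,0} = \operatorname{div}_L(\operatorname{grad}_{H^0} U_{1,0})$ by chaining the two formulas already established. From the Proposition immediately preceding the theorem, I already know the $H^0$-gradient of the parallel transport:
$$
\operatorname{grad}_{H^0}U_{1,0}(\gamma;t)^\mu=-U_{1,t}(\gamma)F^{\mu}_{\ \nu}(\gamma(t))\dot{\gamma}^\nu(t)U_{t,0}(\gamma).
$$
So the remaining task is purely to apply the definition of the Levy divergence to this section $\psi := \operatorname{grad}_{H^0}U_{1,0}$ of $\mathcal H^0\otimes\mathcal E$. By definition, $\operatorname{div}_L\psi(\gamma)=\int_0^1 K^L_{\psi\,\mu\nu}(\gamma;t)g^{\mu\nu}(\gamma(t))\,dt$, where $K^L_\psi$ is the Levy kernel extracted from the bilinear form $g_{\mathcal H^0\otimes\mathcal E}(\nabla^{\mathcal H^0}_X\psi,\,Y\otimes\Phi)$ written in the canonical form~(\ref{AGVtensors}). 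Hence the real work is to differentiate $\psi$ covariantly and read off the coefficient of the diagonal term $\int_0^1\langle X(t),Y(t)\rangle\,dt$.

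\textbf{Computing the covariant derivative.} First I would compute $\nabla^{\mathcal H^0}_X\psi$ for a local section $X$ in $\mathcal H^1_{0,0}$. The section $\psi$ is built from three factors along $\gamma$: the parallel transports $U_{1,t}$, $U_{t,0}$ (sections in $\mathcal E$-type bundles) and the curvature evaluation $F^\mu_{\ \nu}(\gamma(t))\dot\gamma^\nu(t)$ (a section in $\mathcal H^0\otimes\mathcal E$). I expect to apply the Leibniz rule for $\nabla^{\mathcal H^0}$ together with the compatible connection on $\mathcal E$, differentiating each factor in turn. Differentiating the two parallel-transport factors produces, via Proposition~\ref{prop1deriv}, further curvature insertions that come with double integrals in $(s,t)$ — these will land in the Volterra kernel $K^V$ and so are invisible to $\operatorname{div}_L$. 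Differentiating the middle factor $F^\mu_{\ \nu}(\gamma(t))\dot\gamma^\nu(t)$ is where the Levy kernel is born: here I would use that $d_X(\dot\gamma^\nu) = \dot X^\nu$ produces, after recognizing $\nabla X$ via Example~\ref{ex1} (where $\nabla^{\mathcal H^0}_X\Upsilon(\gamma;t)=\nabla X(\gamma;t)$), a term proportional to $\nabla X$ — these feed the singular kernel $K^S$ and again drop out. The surviving diagonal contribution, proportional to $\langle X(t),Y(t)\rangle$ with no integral tying $s$ to a second variable, comes from the covariant derivative of $F^\mu_{\ \nu}$ along $\gamma$, i.e.\ from $\nabla_\lambda F^\mu_{\ \nu}$, contracted appropriately against $\dot\gamma^\nu$ and $\dot\gamma^\lambda$ or, after the trace over $g^{\mu\nu}$, against the metric.

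\textbf{Extracting the trace.} Once $K^L_{\psi\,\mu\nu}$ is identified, I would substitute it into $\operatorname{div}_L\psi(\gamma)=\int_0^1 K^L_{\psi\,\mu\nu}(\gamma;t)g^{\mu\nu}(\gamma(t))\,dt$. The contraction $g^{\mu\nu}\nabla_\mu F_{\ \nu}$ should reassemble exactly the divergence $\nabla^\mu F_{\mu\nu}$ appearing in the Yang-Mills equations, sandwiched between $U_{1,t}$ and $U_{t,0}$ and paired with $\dot\gamma^\nu$, yielding the stated right-hand side
$$
\Delta_L U_{1,0}(\gamma)=-\int_0^1U_{1,t}(\gamma)\nabla^\mu F_{\mu \nu}(\gamma(t))\dot{\gamma}^\nu(t)U_{t,0}(\gamma)\,dt.
$$

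\textbf{The main obstacle} is the bookkeeping in separating the three kernels cleanly: I must verify that the genuinely diagonal term is precisely the covariant divergence $\nabla^\mu F_{\mu\nu}$ and not merely the partial-derivative piece $\partial^\mu F_{\mu\nu}$. The Christoffel contributions from $\nabla^{\mathcal H^0}$ (the symbols $\Gamma_{\sigma(t)}$ in~(\ref{nablaalpha0})) and from the covariant derivative on $\mathcal E$ (the commutator $[A_\mu,\,\cdot\,]$, via the connection in the Remark after Definition~\ref{H_0grad}) must combine with $\partial^\mu F_{\mu\nu}$ to reconstruct the full covariant $\nabla^\mu F_{\mu\nu}$, including the Christoffel terms $-F_{\mu\kappa}\Gamma^\kappa_{\lambda\nu}-F_{\kappa\nu}\Gamma^\kappa_{\lambda\mu}$ recorded in Section~1. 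The covariance of the whole construction (both $\operatorname{grad}_{H^0}$ and $\operatorname{div}_L$ were arranged to be connection-independent) is what guarantees this works, but checking that the normal-coordinate evaluation at each $\sigma(t)$ in~(\ref{Crisstoffel}) correctly produces these terms is the delicate point requiring care.
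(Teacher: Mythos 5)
Your overall strategy coincides with the paper's: compute $\nabla^{\mathcal H^0}_Y J$ for $J=grad_{H^0}U_{1,0}$ by the Leibniz rule, discard the Volterra and singular parts, and trace the Levy kernel. However, there is a genuine gap at the decisive step. After differentiating, the second-order form you obtain is
\begin{equation*}
\ldots-\int_0^1U_{1,t}(\gamma)\nabla_\mu F_{\nu\lambda}(\gamma(t))X^\mu Y^\nu\dot{\gamma}^\lambda(t)U_{t,0}(\gamma)dt-\int_0^1U_{1,t}(\gamma)F_{\mu\nu}(\gamma(t))X^\mu\nabla Y^\nu\,U_{t,0}(\gamma)dt,
\end{equation*}
and this is \emph{not} yet in the canonical form~(\ref{AGVtensors}): the would-be Levy kernel $-\nabla_\mu F_{\nu\lambda}\dot{\gamma}^\lambda$ is not symmetric in $(\mu,\nu)$, whereas $K^L$ must be a section of $\mathcal R_1\otimes\mathcal E$ (built from $\hat{\otimes}^2T^*M$), and the cross term pairs $\nabla Y$ only with $X$, whereas the singular part must appear in the symmetrized combination $\tfrac12\int K^S\langle\nabla X,Y\rangle+\tfrac12\int K^S\langle\nabla Y,X\rangle$. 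You therefore cannot yet conclude that $J\in dom\,div_L$, nor read off ``the'' Levy kernel, since the kernels are uniquely determined only for the canonical form. The paper's equation~(\ref{thm1f2}) supplies the missing step: an integration by parts (using $X(0)=X(1)=Y(0)=Y(1)=0$) combined with the Bianchi identity $\nabla_\mu F_{\nu\lambda}+\nabla_\nu F_{\lambda\mu}+\nabla_\lambda F_{\mu\nu}=0$ and index renaming converts the expression into canonical form, with Levy kernel $\tfrac12U_{1,t}(-\nabla_\mu F_{\nu\lambda}-\nabla_\nu F_{\mu\lambda})\dot{\gamma}^\lambda U_{t,0}$ and antisymmetric singular kernel $U_{1,t}F_{\mu\nu}U_{t,0}$. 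It so happens that the $g^{\mu\nu}$-trace of the symmetrized kernel equals that of the naive one, so your final formula is right, but the argument as proposed does not justify extracting it. Your stated ``main obstacle'' (the Christoffel bookkeeping reconstructing $\nabla^\mu F_{\mu\nu}$) is real but routine; the Bianchi/integration-by-parts symmetrization is the non-mechanical ingredient your plan omits entirely.
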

\begin{proof}
Bellow we denote $grad_{H^0}U_{1,0}(\gamma;t)$ by $J(\gamma;t)$.
At first we find the covariant derivative of the $J$.
In the local coordinates we have the following expression for the directional derivative  $d_YJ$:
\begin{multline*}
d_YJ^\mu(\gamma;t)=-d_YU_{1,t}(\gamma)F^{\mu}_{\ \nu}(\gamma(t))\dot{\gamma}^\nu(t)U_{t,0}(\gamma)
-\\
-U_{1,t}(\gamma)F^{\mu}_{\ \nu}(\gamma(t))\dot{\gamma}^\nu(t)d_YU_{t,0}(\gamma)-\\
-U_{1,t}(\gamma)\partial_\lambda F^\mu_{\ \nu}(\gamma(t))Y^\lambda(\gamma;t)\dot{\gamma}^\nu(t)U_{t,0}(\gamma)-\\
-U_{1,t}(\gamma)F^\mu_{\ \nu}(\gamma(t))\dot{Y}^\nu(\gamma;t)U_{t,0}(\gamma).
\end{multline*}

Let $Y(\gamma;0)=Y(\gamma;1)=0$. Using formulas~(\ref{Crisstoffel}),~(\ref{firstderpar}), 
we obtain that
\begin{multline}
\label{thm1f1}
\nabla^{\mathcal H^0}_YJ(\gamma;t)^\mu=\\
=-U_{1,t}(\gamma)F^\mu_{\ \nu}(\gamma(t))\dot{\gamma}^\nu(t) (\int_0^t U_{t,s}(\gamma)F_{\lambda \kappa}(\gamma(s))Y^\lambda(\gamma;s)\dot{\gamma}^\kappa(s)U_{s,0}(\gamma)ds)-\\-
(\int_t^1 U_{1,s}(\gamma)F_{\lambda \kappa}(\gamma(s))Y^\lambda(\gamma;s)\dot{\gamma}^\kappa(s)U_{s,t}(\gamma)ds)F^\mu_{\ \nu}(\gamma(t))\dot{\gamma}^\nu(t)U_{t,0}(\gamma) -\\-U_{1,t}(\gamma)\nabla_\lambda F^\mu_{\ \nu}(\gamma(t))Y^\lambda(\gamma;t)\dot{\gamma}^\nu(t)U_{t,0}(\gamma)-\\
-U_{1,t}(\gamma) F^\mu_{\ \nu}(\gamma(t))\nabla Y^\nu(\gamma;t)U_{t,0}(\gamma).
\end{multline}
If also $X(\gamma;0)=X(\gamma;1)=0$, the equality
\begin{multline}
\label{thm1f2}
\int_0^1U_{1,t}(\gamma)\nabla_\mu F_{\nu\lambda}(\gamma(t))X^\mu(\gamma;t)Y^\nu(\gamma;t)\dot{\gamma}^\lambda(t)U_{t,0}(\gamma)dt+\\
+\int_0^1U_{1,t}(\gamma) F_{\mu\nu}(\gamma(t))X^\mu(\gamma;t)\nabla Y^\nu(\gamma;t)U_{t,0}(\gamma)=\\
%=\frac 12 U_{1,t}^A(\gamma)\nabla_\lambda F_{\mu\nu}(\gamma(t))Y^\lambda(\gamma;t)\dot{\gamma}^\nu(t)U_{t,0}(\gamma)+\\
%+U_{1,t}^A(\gamma) F^\mu_{\ \nu}(\gamma(t))\nabla Y^\nu(\gamma;t)U_{t,0}(\gamma)=\\
=\frac 12\int_0^1U_{1,t}(\gamma)(\nabla_\mu F_{\nu\lambda}(\gamma(t))\dot{\gamma}^\lambda(t)+\nabla_\nu F_{\mu\lambda}(\gamma(t))\dot{\gamma}^\lambda(t))X^\mu(\gamma;t)Y^\nu(\gamma;t)U_{t,0}(\gamma)dt+\\+\frac 12
\int_0^1U_{1,t}(\gamma)F_{\mu\nu}(\gamma(t))(X^\mu(\gamma;t)\nabla Y^\nu(\gamma;t)+Y^\mu(\gamma;t)\nabla X^\nu(\gamma;t) )U_{t,0}(\gamma)dt
\end{multline}
 can been obtained by integrating by parts, using the Bianchi identities
$$
\nabla_\mu F_{\nu\lambda}+\nabla_\nu F_{\lambda\mu}+\nabla_\lambda F_{\mu\nu}=0
$$
 and renaming of indices.
Formulas~(\ref{thm1f1}) and~(\ref{thm1f2}) together imply that  $J$ belongs to the domain of the Levy divergence. The Volterra kernel $K^V_J$ of $K_J$  has the form
\begin{equation}
\label{KV}
K^V_{J \mu\nu}(\gamma;t,s)=\begin{cases}
U_{1,t}(\gamma)F_{\mu\lambda}(\gamma(t))\dot{\gamma}^\lambda(t)U_{t,s}(\gamma)F_{\nu\kappa}(\gamma(s))
\dot{\gamma}^\kappa(s)U_{s,0}(\gamma),&\text{if $t\geq s$}\\
U_{1,s}(\gamma)F_{\nu\kappa}(\gamma(s))\dot{\gamma}^\kappa(s)U_{s,t}(\gamma)F_{\mu\lambda}(\gamma(t))\dot{\gamma}^\lambda(t)U_{t,0}(\gamma)
,&\text{if $t<s$},
\end{cases}
\end{equation}
the Levy kernel $K^L_J$ has  the form
$$
K^L_{J \mu\nu}(\gamma;t)=\frac 12U_{1,t}(\gamma)(-\nabla_\mu F_{\nu\lambda}(\gamma(t))\dot{\gamma}^\lambda(t)-\nabla_\nu F_{\mu\lambda}(\gamma(t))\dot{\gamma}^\lambda(t))U_{t,0}(\gamma),
$$
ans the singular kernel has the form
$$
K^S_{J \mu\nu}(\gamma;t)=U_{1,t}(\gamma)F_{\mu\nu}(\gamma(t))U_{t,0}(\gamma).
$$
It means that
$$
\Delta_LU_{1,0}(\gamma)=div_LJ(\gamma)=-\int_0^1U_{1,t}(\gamma)\nabla^\mu F_{\mu \nu}(\gamma(t))\dot{\gamma}^\nu(t)U_{t,0}(\gamma)dt.
$$
\end{proof}

\begin{remark}
As it was mentioned in the introduction, the first Levy Laplacian on the infinite dimensional manifold  was introduced  in~\cite{LV2001}.
This Laplacian acts on a space of sections  in a vector bundle   over $\Omega_x$.  The definition of these operator is based on the triviality of the tangent bundle over $\Omega_x$.
In the case $M=\mathbb{R}^d$ both this Levy Laplacian and the covariant Levy Laplacain~(\ref{covLaplaceL}) coincide
with the Levy Laplacian introduced in~\cite{AGV1993}.
The~Levy Laplacian as the Cesaro mean of the second order directional derivatives was defined on a space of sections  in a vector bundle   over $\Omega_x$  in~\cite{AS2006}.
The values of Levy Laplacians introduced in~\cite{LV2001,AS2006}  on the parallel transport coincide with~(\ref{LofU}) (see~\cite{LV2001,Volkovdiss}). Definitions~\ref{H_0grad} and~\ref{LevyLapl} of the  $H^0$-gradient and the Levy Laplacian  can be transferred to the infinite dimension bundles over $\Omega_x$.
In this case we conjecture that all three Levy Laplacians coincide on the domain of the covariant Levy Laplacian~(\ref{covLaplaceL}) .
\end{remark}

\begin{remark}
Laplacians on  abstract Hilbert manifolds were considered in the  literature (see~\cite{BP2016}). It is interesting is it possible to define the Levy Laplacian on the abstract Hilbert manifold and to study the heat equation for this operator. It seems that  the definition of the Levy Laplacian as the Cesaro mean of the second order directional derivatives (see~\cite{AS2006}) can be useful for this purpose.

In the work~\cite{AS2006} the Feynman approximation was obtained for  the solution of the heat equation for the Levy Laplacian.
It is interesting whether it is possible to develop a related approach of the quasi-Feynman approximations to this equation (see~\cite{Remizov}).

Due to the fact that the Levy Laplacian can be defined as the averaging of finite-dimensional Laplacians, it would be interesting to investigate whether it is possible to obtain the heat semigroup for the Levy Laplacian by averaging  of the semigroups for these finite-dimensional
operators (for the method of the averaging  of semigroups see~\cite{OSS2016,Sakbaev2017}).  
\end{remark}

\section{Heat equation for Levy Laplacian and Yang-Mills heat equations}

In  this section  $A(\cdot,\cdot)\in C^{1,\infty}([0,T]\times M,\Lambda^1\otimes ad P)$
and $U_{1,0}(s,\gamma)$ is the parallel transport generated by the connection $A(s,\cdot)$
along the curve $\gamma\in \Omega$.
%The  following proposition is a direct corollary of Lemma~\cite{Duhamel}.
% derivative on points
\begin{proposition}
For any $\gamma\in \Omega$ the following holds
\begin{equation}
\partial_s U_{1,0}(s,\gamma)=-\int_0^1U_{1,t}(s,\gamma)\partial_sA_\mu(s,\gamma(t))\dot{\gamma}^\mu(t)U_{t,0}(s,\gamma)dt.
\end{equation}
\end{proposition}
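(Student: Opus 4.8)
The plan is to recognize this as a direct application of Duhamel's Principle (Lemma~\ref{Duhamel}), where now the variation of the coefficient comes from the time parameter $s$ rather than from a perturbation of the curve as in Proposition~\ref{prop1deriv}. First I would work in a single local trivialization, assuming $\gamma([0,1])\subset W_a$. Fixing $\gamma$, I set $Z(s,t)=A^a_\mu(s,\gamma(t))\dot{\gamma}^\mu(t)$, which for each $s$ lies in $L_2([0,1],\mathrm{End}(\mathbb{C}^N))$; the regularity hypothesis $A(\cdot,\cdot)\in C^{1,\infty}$ guarantees that $s\mapsto Z(s,\cdot)$ is $C^1$ into this $L_2$-space. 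By the first equation of~(\ref{partransp1}), $U^a_{t,0}(s,\gamma)=P(Z(s,\cdot);t)$ in the notation of Lemma~\ref{Duhamel}, so the smoothness of $Z\mapsto P(Z;1)$ asserted there makes $s\mapsto U^a_{1,0}(s,\gamma)$ differentiable.

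Next I would apply the chain rule together with formula~(\ref{Duhamels Principle}), taking the variation $\delta Z(t)=\partial_s Z(s,t)=\partial_s A^a_\mu(s,\gamma(t))\dot{\gamma}^\mu(t)$. This gives $\partial_s U^a_{1,0}(s,\gamma)=-U^a_{1,0}(s,\gamma)\int_0^1 U^a_{0,t}(s,\gamma)\,\partial_s A^a_\mu(s,\gamma(t))\dot{\gamma}^\mu(t)\,U^a_{t,0}(s,\gamma)\,dt$, using $P^{-1}(Z;t)=U^a_{0,t}(s,\gamma)$. The multiplicative property~(\ref{group}) then collapses $U^a_{1,0}U^a_{0,t}$ to $U^a_{1,t}$, yielding the claimed identity in the local trivialization. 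I emphasize that, in contrast to Proposition~\ref{prop1deriv}, the variation $\delta Z$ here contains no time derivative of the perturbing field, so there is no integration by parts; consequently no curvature term and no boundary terms arise, which is why the answer involves $\partial_s A_\mu$ directly rather than $F_{\mu\nu}$.

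To pass to an arbitrary $\gamma\in\Omega$, I would use the partition $0=t_1\le\cdots\le t_n=1$ and the product representation~(\ref{paraltrans}) of the parallel transport. The decisive observation is that the transition functions $\psi_{a_{i+1}a_i}(\gamma(t_i))$ are geometric data of the bundle and do not depend on $s$; hence Leibniz's rule applied to~(\ref{paraltrans}) only differentiates the factors $U^{a_i}_{t_{i+1},t_i}(s,\gamma)$. Applying the local formula to each such factor and reassembling by means of the gluing relation~(\ref{partransp}), the transition functions telescope exactly as in the definition of the global transport, and the locally defined pieces combine into the global objects $U_{1,t}(s,\gamma)$ and $U_{t,0}(s,\gamma)$. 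This produces the global statement.

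I do not expect a genuine obstacle: the only points requiring care are the justification of differentiability under the composition (handled by the smoothness clause of Lemma~\ref{Duhamel} and the $C^1$-dependence of $Z$ on $s$) and the bookkeeping of the chart gluing for a general curve. The latter is routine once one notes the $s$-independence of the transition functions, which is precisely what removes the extra terms present in the spatial variation of Proposition~\ref{prop1deriv}.
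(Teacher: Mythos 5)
Your proposal is correct and follows essentially the same route as the paper: both reduce to Duhamel's Principle (Lemma~\ref{Duhamel}) applied with the variation $\delta Z=\partial_s A^{a_i}_\mu(s,\gamma(\cdot))\dot{\gamma}^\mu(\cdot)$, justified by the $C^{1,\infty}$ regularity, and then glue the pieces of the partition via Leibniz's rule for~(\ref{paraltrans}), the transition functions being $s$-independent. Your extra remarks (no integration by parts, hence no curvature or boundary terms) are accurate but not needed.
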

\begin{proof}
Consider the partition $0=t_1<t_2<\ldots<t_{n-1}<t_n=1$ and the family of local trivializations $\psi_{a_i}\colon  \pi^{-1}(W_{a_i}) \cong W_{a_i}\times \mathbb{C}^N$ of the vector bundle  $E$ such that
$\gamma([t_{i},t_{i+1}]) \subset W_{a_i}$. Due to the fact that the time-depended connection belongs to the class $C^{1,\infty}$,  the mapping 
 $$[0,T]\ni s\mapsto A^{a_{i}}_\mu(s,\gamma(\cdot))\dot{\gamma}^\mu(\cdot)\in L_2([t_{i},t_{i+1}],Lie(G))$$ 
 is  differentiable for any $i\in\{1,\ldots,n\}$.   Lemma~\ref{Duhamel} implies
$$
\partial_sU^{a_{i}}_{t_{i+1},t_{i}}(s,\gamma)=\int_{t_{i}}^{t_{i+1}}U^{a_{i}}_{t_{i+1},t}(s,\gamma)(-\partial_s A^{a_{i}}_\nu(s,\gamma(t))\dot{\gamma}^\nu(t))U^{a_{i}}_{t,t_{i}}(s,\gamma)dt.
$$
Then Leibniz's rule for~(\ref{paraltrans}) implies the statement of the proposition.
\end{proof}

\begin{theorem}
\label{maintheorem}
The following two assertions are equivalent:

1) the flow of connections $[0,T]\ni s\mapsto A(s,\cdot)$ is a solution of the Yang-Mills  heat equations~(\ref{ymh}):

2) the flow of parallel transports   $[0,T]\ni s\mapsto U_{1,0}(s,\cdot)$  is a solution 
of the heat equation for the Levy Laplacian:
\begin{equation}
\label{llhU}
\partial_sU_{1,0}(s,\gamma)=\Delta_L U_{1,0}(s,\gamma).
\end{equation}
\end{theorem}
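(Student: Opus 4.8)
The plan is to evaluate both sides of~(\ref{llhU}) using the two results just established and to recognize that the resulting pointwise identity on $M$ is precisely the Yang-Mills heat equation. At each fixed $s$ the connection $A(s,\cdot)$ has curvature $F(s,\cdot)$, so the Theorem yielding~(\ref{LofU}) applies verbatim and gives
\begin{equation*}
\Delta_L U_{1,0}(s,\gamma)=-\int_0^1 U_{1,t}(s,\gamma)\nabla^\mu F_{\mu\nu}(s,\gamma(t))\dot{\gamma}^\nu(t)U_{t,0}(s,\gamma)\,dt,
\end{equation*}
while the preceding Proposition expresses $\partial_s U_{1,0}(s,\gamma)$ by the analogous integral with $\partial_s A_\nu(s,\gamma(t))$ in place of $\nabla^\mu F_{\mu\nu}(s,\gamma(t))$. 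Setting $\Phi_\nu(s,x)=\partial_s A_\nu(s,x)-\nabla^\mu F_{\mu\nu}(s,x)$, equation~(\ref{llhU}) is therefore equivalent to
\begin{equation*}
\int_0^1 U_{1,t}(s,\gamma)\Phi_\nu(s,\gamma(t))\dot{\gamma}^\nu(t)U_{t,0}(s,\gamma)\,dt=0\qquad\text{for all }\gamma\in\Omega.
\end{equation*}
The implication $1)\Rightarrow 2)$ is then immediate, since $\Phi_\nu\equiv 0$ makes the integrand vanish identically.

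For the converse I would recover the field $\Phi$ from the vanishing of this integral by a localization argument. First I would use the multiplicative property~(\ref{group}) to write $U_{1,t}(s,\gamma)=U_{1,0}(s,\gamma)U_{t,0}^{-1}(s,\gamma)$ and cancel the invertible left factor $U_{1,0}(s,\gamma)$, reducing the identity to
\begin{equation*}
\int_0^1 U_{t,0}^{-1}(s,\gamma)\Phi_\nu(s,\gamma(t))\dot{\gamma}^\nu(t)U_{t,0}(s,\gamma)\,dt=0 .
\end{equation*}
Next I would upgrade this from the endpoint $t=1$ to every sub-interval $[0,\tau]$: applying the identity to the curve obtained by restricting $\gamma$ to $[0,\tau]$ and reparametrizing it to $[0,1]$, and using that the parallel transport is invariant under reparametrization and coincides with the transport along the restriction, I obtain the same identity with the upper limit $1$ replaced by an arbitrary $\tau\in[0,1]$.

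With the integral vanishing for every $\tau$, the absolutely continuous function $\tau\mapsto\int_0^\tau U_{t,0}^{-1}(s,\gamma)\Phi_\nu(s,\gamma(t))\dot{\gamma}^\nu(t)U_{t,0}(s,\gamma)\,dt$ is identically zero, so its derivative vanishes for almost every $\tau$, giving
\begin{equation*}
U_{\tau,0}^{-1}(s,\gamma)\Phi_\nu(s,\gamma(\tau))\dot{\gamma}^\nu(\tau)U_{\tau,0}(s,\gamma)=0\quad\text{for a.e. }\tau .
\end{equation*}
Since $U_{\tau,0}(s,\gamma)$ is invertible this forces $\Phi_\nu(s,\gamma(\tau))\dot{\gamma}^\nu(\tau)=0$ for almost every $\tau$. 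Finally, fixing any $x\in M$ and any $v\in T_xM$ and choosing a smooth curve $\gamma$ with $\gamma(\tau_0)=x$ and $\dot{\gamma}(\tau_0)=v$, continuity of $\tau\mapsto\Phi_\nu(s,\gamma(\tau))\dot{\gamma}^\nu(\tau)$ promotes the almost-everywhere vanishing to vanishing at $\tau_0$, so $\Phi_\nu(s,x)v^\nu=0$; as $x$ and $v$ are arbitrary, $\Phi_\nu(s,\cdot)\equiv 0$, which is exactly~(\ref{ymh}).

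The computational part is routine once~(\ref{LofU}) and the Proposition are in hand; the genuine content, and the step I expect to demand the most care, is the localization in the converse direction, namely justifying the restriction-and-reparametrization passage that produces the identity on all sub-intervals $[0,\tau]$ and checking that the family of admissible $H^1$-curves is rich enough to realize every basepoint $x\in M$ together with every tangent direction $v\in T_xM$.
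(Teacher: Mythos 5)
Your proposal is correct and follows essentially the same route as the paper: both reduce~(\ref{llhU}) to the vanishing of $\int_0^1 U_{1,t}\bigl(\partial_sA_\nu-\nabla^\mu F_{\mu\nu}\bigr)\dot{\gamma}^\nu U_{t,0}\,dt$, localize that identity to every sub-interval $[0,\tau]$ (the paper via the stopped curves $\gamma^r$ and the auxiliary function $R(r)$, you via restriction and reparametrization --- the same invariance properties are invoked), and then differentiate in the upper endpoint and use invertibility of the parallel transport together with the freedom to prescribe $\gamma(\tau_0)$ and $\dot{\gamma}(\tau_0)$. The only differences are cosmetic.
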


\begin{proof}
Let the flow of the parallel transports is a solution of the heat equation for the Levy Laplacian. Fix any curve $\gamma\in C^1([0,1],M)$. Let  the curve $\gamma^r\in\Omega$ be defined by $$\gamma^r(t)=\begin{cases}
\gamma(t), &\text{if $t\leq r$,}\\
\gamma(r), &\text{if $t>r$.}
\end{cases}$$

Let us introduce the function $R\in C^1([0,1],L(E_{\gamma(0)},E_{\gamma(1)}))$ by the formula:
\begin{equation}
R(r)=U_{1,r}(s,\gamma)(\partial_sU_{1,0}(s,\gamma_r)-\Delta_L U_{1,0}(s,\gamma_r)).
\end{equation}
Due to the invariance with respect to the reparametrization  of the parallel transport and due to the  multiplicative property~(\ref{group}) we have
\begin{equation}
\label{R}
R(r)=\int_0^r U_{1,t}(s,\gamma)(\partial_s A_\nu(s,\gamma(t))\dot{\gamma}^\nu(t)-\nabla_\mu F^\mu_{\ \nu}(s,\gamma(t))\dot{\gamma}^\nu(t))U_{t,0}(s,\gamma)dt.
\end{equation}

If $U_{1,0}(\cdot,\cdot)$ is a solution of~(\ref{llhU}) then  $\partial_sU_{1,0}(s,\gamma_r)-\Delta_L U_{1,0}(s,\gamma_r)=0$ and, therefore,  $R(r)\equiv 0$. Differentiating~(\ref{R}), we obtain
$$
\frac {d}{dr}R(r)= U_{1,r}(s,\gamma)(\partial_s A_\nu(s,\gamma(r))\dot{\gamma}^\nu(r)-\nabla_\mu F^\mu_{\ \nu}(s,\gamma(r))\dot{\gamma}^\nu(r))U_{r,0}(s,\gamma)\equiv 0.
$$
It means that
$$
\partial_s A_\nu(s,\gamma(r))\dot{\gamma}^\nu(r)-\nabla_\mu F^\mu_{\ \nu}(s,\gamma(r))\dot{\gamma}^\nu(r)=0
$$
for all $\gamma\in C^1([0,1],M)$ and for all $r\in [0,1]$. So $A(s,\cdot)$ is the Yang-Mills heat flow. The other side of the theorem is trivial.
\end{proof}

\begin{remark}
If the connection $A$ is  time-independent, Theorem~\ref{maintheorem} becomes the Accardi-Gibilisco-Volovich  theorem
on the equivalence of the Yang-Mills equations  and the Laplace equation for the Levy Laplacian. 
\end{remark}

\begin{remark}
%Let $f\in C^2(\mathbb{R}^d,\mathbb{C})$.
Let $f(s,\cdot)$ be a solution of the heat equation on the manifold $M$:
$$
\partial_s f(s,\cdot)=\Delta_{M,g} f(s,\cdot).
$$
Let the family of functionals $L_{f(s,\cdot)}$ on $\Omega$ be defined as in  Examples~\ref{ex2} and~\ref{ex3}.
Then $L_{f(s,\cdot)}$ is a solution of the heat equation for the Levy Laplacian:
$$
\partial_sL_{f(s,\cdot)}=\Delta_L L_{f(s,\cdot)}.
$$
\end{remark}

\begin{remark}
The definitions of the  $H^0$-gradient and the Levy Laplacian can be transferred to the infinite dimension bundle over $\Omega_{x,x}$.  In this case these definitions have the simplest form. We don't know whether Accardi-Gibilisco-Volovich theorem holds in this case: is it true that if $\Delta^LU(\gamma)=0$ for any  $\gamma\in \Omega_{x,x}$ than the  connection associated  with this parallel transport $U$  is a solution of the Yang-Mills equations.
Our proof of the Theorem~\ref{maintheorem} essentially uses the fact that the endpoints of the curves from the base manifold are not fixed.

In this context the following result is interesting.
In the work~\cite{Driver}  it is shown that if an operator-valued function on $\Omega_{x,x}$ has some properties 
of the parallel transport (smoothness, group property, invariance with respect to reparametrization) it is truly the parallel transport generated by some connection in $E$. For the generalization of this result for a groupoid see~\cite{Gibilisco1997}.

\end{remark}

%\section*{Conclusion}
%In the paper   the covariant Levy Laplacian on the infinite dimensional manifold is defined as the composition of the  Levy divergence and the $H^0$-gradient.
%The first definition of the Levy Laplacian on the infinite dimensional manifold was introduced in~\cite{LV2001}.
%The second definition as the Cesaro mean of the second order directional derivatives was introduced in~\cite{AS2006}.
%In our following paper we will show that all three definitions coincide on the domain of the covariant Levy Laplacian.
%It is shown that a time depended connection is a solution of the Yang-Mills heat equations if and only if 
%the associated flow of parallel transport is a solution of the heat equation for the covariant L\'{e}vy Laplacian.  
%The obtained theorem is a natural generalization of the   Accardi-Gibilisco-Volovich theorem  on the equivalence of the Yang-Mills equations and the Laplace equation for the Levy Laplacian.

%ion for the Levy Laplacian on manifold was considered only in the work~\cite{AS2006}.

\section*{Acknowledgments}

The author would like to express his deep gratitude to L.~Accardi, O.~G.~Smolyanov and I.~V.~Volovich for helpful discussions.

\section*{Funding}
This work was supported by the Russian Science Foundation under grant 19-11-00320.

\end{document}